\documentclass[conference]{IEEEtran}

\usepackage[letterpaper,left=0.680in,right=0.680in,bottom=0.990in,top=0.71in]{geometry}
\usepackage{ifpdf}
\ifCLASSINFOpdf
  \usepackage[pdftex]{graphicx}
  \graphicspath{{./results_v3/figures/}}
  \DeclareGraphicsExtensions{.pdf,.png,.jpeg}
\else
  \usepackage[dvips]{graphicx}
  \DeclareGraphicsExtensions{.eps}
\fi
\usepackage{amsmath,amssymb,amsfonts}
\usepackage{amsthm}
\usepackage{booktabs}
\usepackage{array}
\usepackage{cite}
\usepackage{url}
\usepackage{balance}
\usepackage[caption=false,font=footnotesize]{subfig}
\usepackage{xcolor}
\usepackage{tikz}
\usetikzlibrary{arrows.meta,positioning,calc,fit,backgrounds}

\definecolor{hlcol}{RGB}{22,110,62}
\tikzset{
  blk/.style={draw=black!70,rounded corners=1.6pt,align=center,font=\scriptsize,
              text width=20mm,minimum height=8mm,inner sep=2pt},
  hi/.style={blk,draw=hlcol,line width=0.9pt,fill=hlcol!7},
  off/.style={draw=black!55,dashed,rounded corners=1.6pt,align=center,
              font=\scriptsize,inner sep=3pt,fill=black!4},
  ar/.style={-{Latex[length=1.8mm,width=1.3mm]},line width=0.5pt,black!75},
  arhi/.style={-{Latex[length=1.8mm,width=1.3mm]},line width=0.8pt,hlcol}
}

\newtheorem{proposition}{Proposition}
\newtheorem{lemma}{Lemma}
\newtheorem{remark}{Remark}

\begin{document}

\title{Inverse Maxwell-Based Wall-Aware OFDM-ISAC \\ for Slow-Moving Target Sensing}

\author{
  \IEEEauthorblockN{Tri~Nhu~Do, Yosefine~Triwidyastuti, and Gunes~Karabulut~Kurt}
  \IEEEauthorblockA{Department of Electrical Engineering, Polytechnique Montr\'eal, Montr\'eal, QC, Canada\\
  \{tri-nhu.do, yosefine-2.triwidyastuti, gunes.kurt\}@polymtl.ca}
}

\maketitle

\begin{abstract}
In this paper, we study integrated sensing and communication (ISAC) for short-range indoor orthogonal frequency-division multiplexing (OFDM) systems in which the sensing path crosses a building wall. The target is a slow-moving user equipment behind the wall, and its echo is embedded in the wall reflection and static indoor clutter. Because the wall adds excess propagation length, attenuation, and internal reflections, a conventional delay transform reports an apparent range. We therefore formulate physical-range estimation as an inverse Maxwell problem for the known wall: the range operator is the distorted-Born Jacobian of the discretized one-dimensional Helmholtz equation, assembled offline from the calibrated wall and applied online as a per-snapshot range inversion, so that the range-Doppler map is indexed by physical rather than apparent range. We then prove that this operator carries a structural limitation: when the two-way wall factor has constant magnitude and linear phase, the wall-aware range image is identically the free-space image on a translated grid, for every ridge level, taper, and noise realization. A free-space branch translated by the same excess length is therefore a required comparison. Sensing-only numerical results for a representative layered wall, with every branch on one range grid and a common post-FFT CSI model, show that the proposed inverse removes the range bias that wall-unaware processing cannot, and that a scalar-corrected free-space inverse tracks it closely, the two differing only marginally in empirical range RMSE and detection probability across the tested sweep.
\end{abstract}

\begin{IEEEkeywords}
OFDM sensing, through-wall radar, Maxwell equations, Helmholtz equation, distorted-Born approximation, range bias, delay compensation.
\end{IEEEkeywords}

\section{Introduction}

Orthogonal frequency-division multiplexing (OFDM) is attractive for sensing because subcarrier and symbol indices encode delay and Doppler, and because a colocated receiver that knows the transmitted symbols can form a coherent range-Doppler image \cite{DaiCOMST2026}; the same channel-state information (CSI) structure underlies OFDM channel estimation \cite{Nguyen_TGCN_2026}, is reused by integrated sensing and communication (ISAC) for data transmission \cite{LuoTWC2024}, and lets WLAN sensing detect human presence from beacon-frame CSI \cite{MathWorksWLANSensingDL}. When the propagation path crosses a building wall, the delay axis is no longer the physical range axis: the wall adds electrical length, attenuation, and internal reflections, so a conventional delay transform reports an apparent range with a range bias.

Through-wall radar has long treated that range bias as a focusing error: a wave travels more slowly inside a dielectric wall, so a receiver that assumes free-space propagation reports the excess delay as excess range, while the slab also attenuates the path, ripples the spectrum through internal reflections, and refracts the wave \cite{LiaoTAES2025}. Wideband beamformers and synthetic-aperture processors compensate the excess delay and the refraction in order to restore a free-space image \cite{AhmadTAES2005,DehmollaianTGRS2008}. Linearized inverse scattering about a known background is the corresponding model-based device \cite{ChewTMI1990}. High-order Maxwell discretizations with embedded interface conditions \cite{LawJCP2026} make a wall-aware Jacobian of the OFDM CSI look like a ready-made operator; the numerics below use a finite-difference discretization of the 1-D reduction instead. The research question is therefore not whether such an operator reduces range bias, but what the reduction can be attributed to. A calibrated wall \cite{YaoTIM2025} supplies two distinct pieces of information: a scalar range bias $\delta_w$, which any receiver can subtract from its delay estimate, and a frequency-dependent transfer function, which only a model-based operator can exploit. If the scalar accounts for the entire gain, the improvement reduces to subtracting a known constant.

In this paper, we consider short-range indoor OFDM sensing in which a wall separates the transceiver from the target. The user equipment (UE) behind the wall is the sensing target; it moves slowly, and its echo competes with the reflection from the wall and with returns from static indoor scatterers. The data-bearing OFDM waveform serves both the downlink and monostatic sensing. We formulate physical-range estimation as an inverse problem and propose a background-calibrated inverse Maxwell operator, assembled offline from the known wall and applied online during sensing. Major contributions of the present study include the following:
\begin{itemize}
  \item We propose an inverse Maxwell method for OFDM ISAC that senses a slow-moving target behind a known wall: the range operator is the distorted-Born Jacobian of the discretized Helmholtz equation of the calibrated wall, assembled offline and applied online per snapshot, so that the range-Doppler map is indexed by physical rather than apparent range. We show, and verify numerically, that it reduces to the squared wall transmission times a free-space steering kernel with a constant independent of frequency and range.
  \item We prove that, for a constant-magnitude, linear-phase two-way wall factor, the wall-aware estimator is a relabeling of the free-space estimator shifted by $\delta_w$. A free-space branch translated by $\delta_w$ is therefore a required comparison.
  \item Numerical results on a representative $L$-layer dielectric stack, with all branches on one range grid and a common post-FFT CSI model, show that the residual ripple and phase curvature of the wall factor leave both the range RMSE and the detection probability essentially unchanged; the scalar-corrected free-space inverse likewise departs from the Maxwell inverse only marginally over the same sweep.
\end{itemize}

\section{System Model}

\begin{figure}[t]
\centering
\includegraphics[width=0.80\columnwidth]{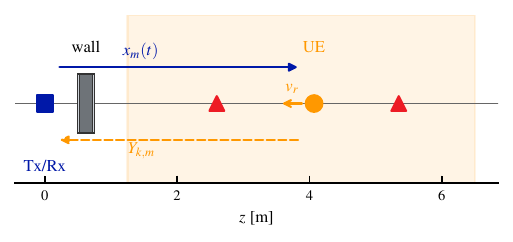}
\caption{Sensing geometry. Square: transceiver at $z=0$. Slab: known layered wall. Triangles: static clutter. Circle: UE at range $r_0$, approaching at $v_r$.}
\label{fig:geometry}
\end{figure}

We consider a phase-coherent monostatic OFDM transceiver located at $z=0$ in a short-range indoor setting, as illustrated in Fig.~\ref{fig:geometry}. The OFDM waveform carries data for a downlink and is reused for sensing; this work evaluates only the latter. The UE behind the wall at physical range $r_0$ is the sensing target; its two-way echo, collected at the colocated transceiver \cite{DaiCOMST2026}, is the sensing path. A single spatial channel supports neither angle nor shape estimation, so the demonstration is a 1-D normal-incidence reduction.

A known wall occupies $[z_w,z_w+d_w]$ and consists of $L$ homogeneous, isotropic, nonmagnetic layers with thicknesses $d_i$, relative permittivities $\varepsilon_{r,i}$, and conductivities $\sigma_i$, so that $d_w=\sum_{i=1}^{L}d_i$. The UE lies at physical range $r_0>z_w+d_w$ with radial velocity $v_r$, positive when approaching, and is modeled as an effective point target for sensing. The data-generating wall is an $L$-layer dielectric stack (Table~\ref{tab:config}); the inverse model is a single homogeneous slab whose one-way travel time is calibrated to that stack.

\subsection{Transmit OFDM-ISAC Signal}

The transceiver uses $K$ subcarriers at spacing $\Delta f$, occupied bandwidth $B=K\Delta f$ (the span of $\{f_k\}$ is $(K-1)\Delta f$), and subcarrier frequencies $f_k=f_c+(k-K/2)\Delta f$ for $k=0,\ldots,K-1$; we write $\tilde f_k=f_k-f_c$. Range resolution uses the DFT convention $\Delta R=c/(2B)$. Let $S_{k,m}$ denote the known unit-modulus quadrature phase-shift keying (QPSK) data symbol on subcarrier $k$ of snapshot $m$. The baseband OFDM symbol is
\begin{equation}
x_m(t)=\sum_{k=0}^{K-1}S_{k,m}\,e^{j2\pi\tilde f_k t},\qquad t\in[-T_{\rm cp},T_u],
\label{eq:xm}
\end{equation}
where $T_u=1/\Delta f$ and $T_{\rm cp}$ is the cyclic prefix (CP); the symbol duration is $T_{\rm sym}=T_u+T_{\rm cp}$. Sensing snapshots are taken at slow time $t_m=mT_r$, with $T_r\neq T_{\rm sym}$ in general.

\subsection{Channel Modeling: Medium, Wall, and Clutter}

We adopt the $e^{+j\omega t}$ convention, for which forward propagation is $e^{-jkz}$. For normal incidence on an isotropic nonmagnetic medium, a transverse scalar field $u$ satisfies the Helmholtz equation \cite{Chew1995Waves}
\begin{equation}
\frac{d^2u}{dz^2}+k_0^2\varepsilon_{r,c}(z,\omega)u=q,\quad k_0=\omega/c,\ q=j\omega\mu_0 J_{s,x},
\label{eq:helmholtz}
\end{equation}
with complex relative permittivity $\varepsilon_{r,c}=\varepsilon_r-j\sigma/(\omega\varepsilon_0)$ and outgoing waves in the exterior half spaces.

The wall response is obtained from the exact multilayer transfer matrix. With refractive indices $n_i=\sqrt{\varepsilon_{r,c,i}}$ on the principal branch, so that $\Im\{n_i\}\leq 0$, the interface and propagation matrices are
\begin{align}
\mathbf I_{i,i+1}&=\frac12
\begin{bmatrix}
1+n_{i+1}/n_i & 1-n_{i+1}/n_i\\
1-n_{i+1}/n_i & 1+n_{i+1}/n_i
\end{bmatrix},\\
\mathbf P_i&=\operatorname{diag}\!\bigl(e^{+jk_0 n_i d_i},e^{-jk_0 n_i d_i}\bigr).
\end{align}
With $n_0=n_{L+1}=1$ and $\mathbf T=\mathbf I_{0,1}\prod_{i=1}^{L}\bigl(\mathbf P_i\mathbf I_{i,i+1}\bigr)$, the stack reflection and transmission are $r_w=T_{21}/T_{11}$ and $t_w=1/T_{11}$. Since the UE is parameterized by its physical range, the free-space delay over the wall extent is de-embedded: the one-way factor and the monostatic two-way wall factor are
\begin{equation}
t_{w,{\rm rel}}(f)=t_w(f)e^{+jk_0 d_w},\qquad
g_w(f)=\bigl[t_{w,{\rm rel}}(f)\bigr]^{2},
\label{eq:gw}
\end{equation}
where the square expresses reciprocity of the two one-way transits at normal incidence. By construction $g_w\equiv 1$ for an all-air stack of any thickness.

The wall reflection and static scatterers at fixed ranges form a calibrated background CSI $H_{{\rm bg},k}$, which is not a sensing parameter.

\begin{figure*}[t]
\centering
\begin{tikzpicture}[node distance=4.2mm]
\node[blk] (tx)   {(1) OFDM Tx\\$K,\Delta f$, QPSK $S_{k,m}$};
\node[blk,right=of tx]  (prop) {(2) Two-way\\through-wall path};
\node[blk,right=of prop] (rx)  {(3) CP removal, FFT,\\{}$\div S_{k,m}$: \eqref{eq:Ykm}};
\node[blk,right=of rx]  (bg)   {(4) Background\\subtraction, \eqref{eq:diffcsi}};
\node[hi, right=of bg]  (inv)  {(5) Range\\inversion, \eqref{eq:ridgesvd}};
\node[blk,right=of inv] (dop)  {(6) Doppler FFT\\{}+ RD calib. \eqref{eq:rdnorm}};
\node[blk,right=of dop] (det)  {(7) Detect\\$\widehat r_0,\widehat v_r$};

\foreach \a/\b in {tx/prop,prop/rx,rx/bg,dop/det}{\draw[ar] (\a)--(\b);}
\draw[ar] (bg)--(inv) node[midway,font=\tiny,inner sep=0pt,below=4.2mm] {$\Delta\underline{\vec d}_m$};
\draw[ar] (inv)--(dop) node[midway,font=\tiny,inner sep=0pt,below=4.2mm] {$\widehat{\vec x}_m$};

\node[off,above=6mm of inv,text width=52mm] (wall)
  {Known wall $\{d_i,\varepsilon_{r,i},\sigma_i\}$ $\Rightarrow$ discretized Helmholtz $\mathcal A_k$ \eqref{eq:helmholtz}, adjoint\\[1pt]
   $\Rightarrow$ \textbf{\color{hlcol}inverse-Maxwell Jacobian} $\mathbf J$, Lemma~\ref{lem:jacobian}, \eqref{eq:jwall}};
\draw[arhi] (wall)--(inv);
\node[font=\tiny,anchor=east,inner sep=1pt] at ($(wall.south)+(-1mm,-3.2mm)$) {offline};

\node[off,left=6mm of wall,text width=50mm] (eqv)
  {Prop.~\ref{prop:equiv}: identical to free-space $\mathbf J$ on $\mathcal R+\delta_w$\\[1pt]
   $\Rightarrow$ delay transform with one subtraction $\delta_w$, \eqref{eq:deltaw}};
\draw[{Latex[length=1.8mm,width=1.3mm]}-{Latex[length=1.8mm,width=1.3mm]},dashed,
        black!60,line width=0.5pt] (eqv.south east)--(inv.north west)
        node[pos=0.5,above right,font=\tiny,inner sep=0.5pt] {$\equiv$};
\end{tikzpicture}
\caption{Processing chain of the wall-aware OFDM ISAC receiver. Steps (1)--(4) and (6)--(7) are conventional OFDM radar; the numerics represent (1)--(3) by the post-FFT CSI model \eqref{eq:Ykm} and run (4)--(5) with one-dimensional range detection in every Monte Carlo trial, while steps (6)--(7) are exercised on the single representative CPI of Fig.~\ref{fig:rdmaps}. The inverse Maxwell model enters only in the two shaded blocks; Proposition~\ref{prop:equiv} states that step~(5) is reproducible by a free-space inverse on the grid translated by $\delta_w$ (dashed box at left).}
\label{fig:pipeline}
\end{figure*}

\subsection{Received Signal at the UE and at the Sensing Receiver}

The one-way OFDM field at the UE, after transit of the wall, is the downlink communication observation,
\begin{equation}
u_{\rm UE}(f)=t_{w,{\rm rel}}(f)\,e^{-jk_0 r_0}\,X(f),
\label{eq:utgt}
\end{equation}
where $X(f)$ is the frequency-domain transmit symbol. It is recorded only to identify the two-way sensing path; bit-error rate and rate are not reported.

The two-way sensing echo of the same UE is received at the colocated transceiver. After CP removal and the subcarrier FFT, the frequency-domain sensing observation on snapshot $m$ is
\begin{equation}
\underline{Y}_{k,m}=H_{k,m}S_{k,m}+\underline{N}_{k,m},
\label{eq:Ykm}
\end{equation}
where $\underline{N}_{k,m}\sim\mathcal{CN}(0,\sigma_{\rm rx}^2)$ is independent across subcarriers and snapshots, and the CSI is the superposition of clutter and the dynamic UE,
\begin{equation}
H_{k,m}=H_{{\rm bg},k}+\alpha\,g_w(f_k)\,e^{-j4\pi f_k r_m/c},
\label{eq:Hkm}
\end{equation}
where $r_m=r_0-v_r(m-M/2)T_r$ is the UE range on snapshot $m$, $r_0$ its range at the centre of the coherent processing interval (CPI) of $M$ snapshots, and the carrier part of the phase, $e^{+j2\pi\nu mT_r}$ with $\nu=2f_c v_r/c$, is the monostatic Doppler. The band-dependent remainder is the range walk over the CPI and the full-band wideband Doppler spread $2|v_r|B/c$; both are kept in the data and neglected by the fixed range operator of Section~\ref{sec:proposed}. Since $|S_{k,m}|=1$, the sensing CSI $\underline{Y}_{k,m}/S_{k,m}=H_{k,m}+\underline{N}_{k,m}/S_{k,m}$ has noise of the same distribution. Equation~\eqref{eq:Ykm} is the output of steps (1)--(3) of Fig.~\ref{fig:pipeline} for a causal channel supported on $[0,T_{\rm cp}]$ and time invariant within one symbol: for the tone sum \eqref{eq:xm}, the received signal on $[0,T_u]$ equals $\sum_k S_{k,m}H(f_k)e^{j2\pi\tilde f_k t}$, and its $K$ samples return $H(f_k)S_{k,m}$ through the DFT. The numerics therefore start from \eqref{eq:Ykm} rather than executing steps (1)--(3), with a direct-path round trip $\tau_{\max}=2(r_{\mathrm{ROI,max}}+\delta_w)/c<T_{\rm cp}$ for $\delta_w$ of \eqref{eq:deltaw}; the internal reflections of the wall are not of finite support, and Section~\ref{sec:results} verifies the sampled OFDM identity for a finite-delay test channel. We therefore adopt \eqref{eq:Ykm} as an ideal synchronized post-FFT observation model, neglecting intercarrier interference and any channel energy outside the CP; all reported results are conditional on it.

\subsection{Detection and Estimation Task}

The sensing task is to detect the through-wall UE in a range region of interest (ROI) and to estimate its physical range $r_0$ and radial velocity $v_r$. A conventional chain applies a delay transform on the subcarriers of the CSI and a Doppler FFT across snapshots, which estimates an \emph{apparent} free-space range $r^{\rm app}=c\tau/2$. The proposed chain, presented in Section~\ref{sec:proposed}, replaces the delay transform by a wall-aware inverse and then applies the same Doppler FFT. Under negligible migration, the reduced unknown at range cell $r_p$ is the complex range-cell coefficient
\begin{equation}
x_{p,m}=a_p e^{+j2\pi\nu m T_r},
\label{eq:xpm}
\end{equation}
so that one fixed linear range operator applies at every $m$ and preserves the phase required by the slow-time FFT; a literal material parameter would not carry that phase.

\section{Problem Formulation}
\label{sec:problem}

Let $\Delta\underline{\vec{d}}_m\in\mathbb{C}^K$ denote the background-subtracted CSI on snapshot $m$, obtained by subtracting one averaged no-UE reference from the CSI of \eqref{eq:Ykm}. A conventional delay processor estimates
\begin{equation}
\widehat{r}^{\rm app}
=\arg\max_{r_q^{\rm app}}
\Bigl\lvert
\sum_{k=0}^{K-1}w_k\,\Delta\underline{D}_{k,m}\,e^{+j4\pi\tilde f_k r_q^{\rm app}/c}
\Bigr\rvert,
\label{eq:delayest}
\end{equation}
where $w_k$ is a Hann taper. Because of the wall, $\widehat{r}^{\rm app}$ differs from $r_0$ by a systematic \emph{range bias}, a deterministic offset that remains at infinite SNR, of approximately the geometric excess
\begin{equation}
\delta_w=\sum_{i=1}^{L}\bigl(\sqrt{\varepsilon_{r,i}}-1\bigr)d_i,
\label{eq:deltaw}
\end{equation}
which for a single slab is $d_w(\sqrt{\varepsilon_r}-1)$. Equation~\eqref{eq:deltaw} is the range equivalent of the two-way excess delay, using the real lossless indices; it is not the band group delay of $g_w$.

The objective is to recover the physical range from the same CSI. Let $\mathbf{J}\in\mathbb{C}^{K\times P}$ be a range operator on a physical-range grid $\mathcal{R}=\{r_p\}_{p=1}^P$. The range image on snapshot $m$ is obtained from the ridge problem
\begin{equation}
\widehat{\vec{x}}_m
=\arg\min_{\vec{x}}
\bigl\|\mathbf{W}_f(\Delta\underline{\vec{d}}_m-\mathbf{J}\vec{x})\bigr\|_2^2
+\lambda\|\vec{x}\|_2^2,
\label{eq:opt}
\end{equation}
where $\mathbf{W}_f=\operatorname{diag}(w_0,\ldots,w_{K-1})$. With $\mathbf{W}$ the linear processor that maps $\Delta\underline{\vec{d}}_m$ to $\widehat{\vec{x}}_m$ and $G_p=\sum_k|W_{p,k}|^2$ its row noise gain, the range estimate is $\widehat{r}=\arg\max_{r_p\in\mathcal{R}}|\widehat{x}_{p,m}|^2/G_p$; the same noise-gain-normalized statistic drives detection in Section~\ref{sec:results}. The Doppler estimate is obtained from the slow-time FFT of $\widehat{\mathbf{X}}=[\widehat{\vec{x}}_0,\ldots,\widehat{\vec{x}}_{M-1}]$.

The choice of $\mathbf{J}$ is the key: a free-space steering matrix reproduces the range bias of \eqref{eq:delayest}, whereas a Maxwell-derived Jacobian of the known wall embeds $g_w(f)$ so that the peak is indexed by physical range. Neither \eqref{eq:opt} nor $\mathbf{J}$ separates, by itself, the scalar $\delta_w$ from the frequency-dependent content of $g_w$; that separation is the subject of Section~\ref{sec:proposed}.

\section{Proposed Wall-Aware Inverse Processing}
\label{sec:proposed}

Fig.~\ref{fig:pipeline} summarizes the receiver. Only step~(5) distinguishes the proposed chain from a conventional OFDM radar; the wall enters it through the Jacobian derived next.

\subsection{Background-Calibrated Distorted-Born Jacobian}

Let $\chi=\varepsilon_{r,c}-\varepsilon_{\rm bg}$ be the contrast and let $\mathcal{A}_k(\chi)$ be a discretized Helmholtz operator \eqref{eq:helmholtz} at subcarrier $k$, let $\vec{q}_k$ be the source, and let $\vec{c}^T$ be the receiver sampling functional, so that $\mathcal{A}_k(\chi)\vec{u}_k=\vec{q}_k$ and $d_k(\chi)=\vec{c}^T\vec{u}_k$. The known background $\chi_{\rm bg}$ is the wall alone: static clutter is additive in \eqref{eq:Hkm} and is removed by the subtraction of \eqref{eq:diffcsi}. For a small dynamic contrast $\delta\chi_m$,
\begin{equation}
\mathcal{A}_{{\rm bg},k}\,\delta\vec{u}_{k,m}
\approx-\delta\mathcal{A}_{k,m}\,\vec{u}_{{\rm bg},k},
\end{equation}
and the differential observation is linear in $x_{p,m}$.

\begin{lemma}[Distorted-Born Jacobian and its layered reduction]
\label{lem:jacobian}
For a unit point source and receiver at $z=0$ and a nodal scalar contrast with point weights $\Delta z_p$,
$J_{{\rm bg},k,p}=-k_{0,k}^2 g_{k,p}u_{{\rm bg},k,p}\Delta z_p$
with $\mathcal{A}_{{\rm bg},k}^{T}\vec{g}_k=\vec{c}$. For the 1-D layered background of Section~II,
\begin{equation}
J_{{\rm bg},k,p}=\frac{\Delta z_p}{4}\,g_w(f_k)\exp\!\bigl(-j4\pi f_k r_p/c\bigr),
\label{eq:jwall}
\end{equation}
whose constant depends on neither $k$ nor $p$.
\end{lemma}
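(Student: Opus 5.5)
The plan is to prove the statement in two stages. First, derive the general adjoint-state form of the Jacobian from the linearized relation stated just before the lemma. Second, evaluate it in closed form for the 1-D layered background using the outgoing Green's function and the transfer-matrix quantities of Section~II.

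\emph{Stage 1 (adjoint form).} I would take the discretized operator as $\mathcal A_k(\chi)=\mathbf D_2+k_{0,k}^2\operatorname{diag}(\varepsilon_{\rm bg}+\chi)$, where $\mathbf D_2$ is the second-difference operator, the contrast term is the only one depending on $\chi$, and the nodal quadrature weights are absorbed as point weights $\Delta z_p$. Then $\partial\mathcal A_k/\partial\chi_p=k_{0,k}^2\,\vec e_p\vec e_p^T$ per unit of nodal contrast, and with the point weight $\Delta z_p$ the contrast coefficient enters as $\delta\chi_p\Delta z_p$. From $\mathcal A_{{\rm bg},k}\delta\vec u_k\approx-\delta\mathcal A_k\vec u_{{\rm bg},k}$ I obtain
\begin{equation*}
\delta d_k=\vec c^T\delta\vec u_k=-\vec c^T\mathcal A_{{\rm bg},k}^{-1}\delta\mathcal A_k\vec u_{{\rm bg},k}=-\vec g_k^T\delta\mathcal A_k\vec u_{{\rm bg},k},
\end{equation*}
with $\mathcal A_{{\rm bg},k}^T\vec g_k=\vec c$. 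Reading off the coefficient of the $p$-th contrast cell gives $J_{{\rm bg},k,p}=-k_{0,k}^2g_{k,p}u_{{\rm bg},k,p}\Delta z_p$. Identifying $\delta\chi_p\Delta z_p$ with the coefficient $x_{p,m}$ of \eqref{eq:xpm} then makes the differential observation linear in $x_{p,m}$.

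\emph{Stage 2 (layered reduction).} Because the source and the receiver are both the unit point functional at $z=0$, we have $\vec q_k=\vec c$. If the discretization, including the outgoing (absorbing) boundary closure, is complex-symmetric ($\mathcal A^T=\mathcal A$, not Hermitian), then $\vec g_k=\vec u_{{\rm bg},k}$, which is discrete reciprocity. So $J$ only needs $u_{\rm bg}^2$ at $z=r_p$ behind the wall. In the continuum limit, the free-space solution of $u''+k_0^2u=\delta(z)$ is $u=\frac{j}{2k_0}e^{-jk_0|z|}$. This follows from the derivative jump $-2jk_0A=1$, i.e.\ $A=j/(2k_0)$.

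Next I propagate this field through the stack. The field incident on the wall is $A e^{-jk_0z}$, and the transmitted field for $z>z_w+d_w$ is $A\,t_w\,e^{-jk_0(z-d_w)}$. In the de-embedded form of \eqref{eq:gw} this equals $A\,t_{w,{\rm rel}}(f)\,e^{-jk_0z}$. I need to check that the phase reference of $\mathbf T$ is consistent with the de-embedding factor $e^{+jk_0d_w}$. The outgoing condition and the absence of any scatterer beyond $r_p$ in the background guarantee that there is no left-going wave there. Squaring and substituting then gives
\begin{equation*}
J=-k_0^2\cdot\Bigl(\tfrac{j}{2k_0}\Bigr)^2 t_{w,{\rm rel}}^2e^{-2jk_0r_p}\Delta z_p=\tfrac{\Delta z_p}{4}\,g_w(f_k)\,e^{-j4\pi f_kr_p/c}.
\end{equation*}
The $k_0^2$ cancels exactly, which is why the constant $1/4$ is independent of $k$ and $p$.

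\emph{Main obstacle.} The delicate step is justifying $\vec g_k=\vec u_{{\rm bg},k}$ and the value of the Green's-function amplitude at the discrete level. Two things must hold. First, the absorbing boundary must preserve complex symmetry; a symmetric PML or a first-order Sommerfeld row is, once scaled, symmetric. Second, the discrete point source $\vec e_0$ must correspond to $\delta(z)\Delta z_0$, so the normalizations of source and receiver combine consistently. I would handle this by stating the lemma up to $O((k_0\Delta z)^2)$ dispersion error, which is verified numerically as the contributions claim. A secondary check is that the two one-way transits carry the same factor $t_{w,{\rm rel}}$. This is reciprocity of the stack at normal incidence: $\det\mathbf T=1$ for nonmagnetic layers with matched exterior indices, so left-to-right and right-to-left transmissions coincide, which justifies the square in $g_w$.
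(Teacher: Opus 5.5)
Your proposal is correct and follows the paper's proof essentially step for step. The shared steps are the adjoint-state derivative $\delta d_k=-\vec g_k^T\delta\mathcal A_k\vec u_{{\rm bg},k}$ with the transpose, the 1-D outgoing Green's function $\tfrac{j}{2k_0}e^{-jk_0|z|}$ carried behind the wall as $\tfrac{j}{2k_0}t_{w,{\rm rel}}e^{-jk_0 r_p}$, reciprocity $g=u_{\rm bg}$, and the cancellation $-k_0^2\bigl(\tfrac{j}{2k_0}\bigr)^2=\tfrac14$.

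The discrete normalization you flag as the main obstacle is a scale factor, not a dispersion error, and the paper settles it exactly. Take $\vec q_k=e_s/h$ and let the target enter as $k_{0,k}^2\sum_p a_p(\Delta z_p/h)e_{i_p}e_{i_p}^T$. Complex symmetry then gives $\vec g_k=h\vec u_{{\rm bg},k}$ and $\partial d_k/\partial a_p=-k_{0,k}^2\Delta z_p u_{{\rm bg},k,i_p}^2$ exactly, with \eqref{eq:jwall} as its continuum limit. In this bookkeeping the unknown is the nodal amplitude $a_p$, not $\delta\chi_p\Delta z_p$ as you wrote, because the weight $\Delta z_p$ already sits in $J$.
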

\begin{proof}
Differentiating $d_k=\vec{c}^T\mathcal{A}_k^{-1}\vec{q}_k$ gives $\delta d_k=-\vec{g}_k^T\delta\mathcal{A}_k\vec{u}_{{\rm bg},k}$ with $\mathcal{A}_{{\rm bg},k}^T\vec{g}_k=\vec{c}$ and $\delta\mathcal{A}_k=k_0^2\operatorname{diag}(\delta\chi)$. The transpose, not the Hermitian transpose, appears because $\vec{c}^T\mathcal{A}^{-1}$ is the reciprocal forward derivative. The outgoing Green's function of \eqref{eq:helmholtz} in 1-D is $G(z,z')=\tfrac{j}{2k_0}e^{-jk_0|z-z'|}$, so behind the wall $u_{\rm bg}(r_p)=\tfrac{j}{2k_0}t_{w,{\rm rel}}(f)e^{-jk_0 r_p}$, and reciprocity gives $g(r_p)=u_{\rm bg}(r_p)$. Hence $J=-k_0^2u_{\rm bg}(r_p)^2\Delta z_p$, and $-k_0^2\bigl(\tfrac{j}{2k_0}\bigr)^2=\tfrac14$ yields \eqref{eq:jwall}. In one dimension there is no geometric spreading.
\end{proof}

The Born factor $-k_0^2$ is cancelled exactly by the two Green's functions, so no frequency window is dropped. Concretely, with $h$ the node spacing, $e_s$ the transceiver node and $i_p$ the node at $r_p$, the finite-dimensional model is $\mathcal{A}_k(\vec{a})=\mathcal{A}_{{\rm bg},k}+k_{0,k}^2\sum_p a_p(\Delta z_p/h)e_{i_p}e_{i_p}^{T}$ with $\vec{q}_k=e_s/h$: the target enters as an additive point potential on the fixed background, and the modified-wavenumber rule that assembles $\mathcal{A}_{{\rm bg},k}$ is not itself differentiated. Complex symmetry gives $\vec{g}_k=h\vec{u}_{{\rm bg},k}$ and $\partial d_k/\partial a_p|_{\vec{a}=0}=-k_{0,k}^2\Delta z_p u_{{\rm bg},k,i_p}^2$, the exact derivative of that model; \eqref{eq:jwall} is its continuum limit and the finite-$h$ gap is quantified below. The weights $\Delta z_p$ normalize the point scatterers on $\mathcal{R}$; they are not exact integrals over finite-width material cells. The reduced operator is \eqref{eq:jwall} with unit-norm columns and the carrier phase $e^{-j4\pi f_c r_p/c}$ absorbed into $x_{p,m}$, i.e., $J_{k,p}\propto g_w(f_k)e^{-j4\pi\tilde f_k r_p/c}$. The numerics assemble $\mathcal{A}_k$ as a second-order finite-difference operator on a $0.1$~mm grid (70\,501 nodes) with the modified wavenumber $2(1-\cos k_0 n h)/h^2$, which makes plane waves exact in each homogeneous region, and exact discrete outgoing conditions; $\mathcal{A}_k$ is complex symmetric, so one banded solve per subcarrier gives $\vec{u}_{\rm bg}$ and $\vec{g}$. The assembled Jacobian divided by \eqref{eq:jwall} is independent of $p$ to numerical precision, its constant is flat across the band, and the reduced operator agrees with the closed-form kernel. The free-space baseline is the same assembly with $\varepsilon_r\equiv1$; data generation uses the exact layered transfer matrix, the inverse the travel-time-calibrated slab.

Collecting snapshots, the measured differential CSI is
\begin{equation}
\Delta\underline{\mathbf{D}}
=\mathbf{J}_{\rm true}\mathbf{X}
+\underline{\mathbf{N}}_{\rm target}
-\underline{\mathbf{N}}_{{\rm bg,ref}}\vec{1}_M^T,
\label{eq:diffcsi}
\end{equation}
where $\mathbf{J}_{\rm true}$ is the true layered linear map, $\mathbf{X}$ collects $x_{p,m}$, and the background reference is the average of $N_{\rm bg}$ no-UE snapshots, subtracted at every $m$.

\subsection{Ridge Inverse and Range-Doppler Processing}

With $\widetilde{\mathbf{J}}=\mathbf{W}_f\mathbf{J}=\mathbf{U}\boldsymbol{\Sigma}\mathbf{V}^H$, the solution of \eqref{eq:opt} is
\begin{equation}
\widehat{\vec{x}}_m
=\mathbf{V}\operatorname{diag}\!\Bigl(\frac{\sigma_i}{\sigma_i^2+\lambda}\Bigr)
\mathbf{U}^H\mathbf{W}_f\Delta\underline{\vec{d}}_m
\equiv\mathbf{W}_\lambda\Delta\underline{\vec{d}}_m.
\label{eq:ridgesvd}
\end{equation}
The ridge level $\lambda=10^{-2}\sigma_{\max}^2$, with $\sigma_{\max}$ the larger of the two operators' leading singular values, is shared by the wall-aware and free-space inverses, so regularization of the ill-conditioned oversampled dictionary cannot confound the comparison. Both branches then take the same slow-time Hann-weighted Doppler FFT,
\begin{equation}
Z_{p,\ell}=\sum_{m=0}^{M-1}w_m^{\rm s}\widehat{X}_{p,m}e^{-j2\pi\ell m/M},
\label{eq:doppfft}
\end{equation}
so that a target factor $e^{+j2\pi\nu m T_r}$ peaks at $\nu_\ell=\ell/(MT_r)$ and $v_\ell=c\nu_\ell/(2f_c)$. The FFT acts on complex $\widehat{X}_{p,m}$; taking $|\widehat{X}_{p,m}|$ first would destroy the carrier Doppler phase.

Because the same averaged reference is subtracted at every $m$, its error is not white in slow time: with the receiver noise white and the $N_{\rm bg}$ reference snapshots independent of the target-present ones, the noise in $Z_{p,\ell}$ has variance $G_p\bigl[\sigma_{\rm rx}^2\sum_m|w_m^{\rm s}|^2+\sigma_{\rm rx}^2 N_{\rm bg}^{-1}|V_\ell|^2\bigr]$ for a processor with row gain $G_p$ and window spectrum $V_\ell=\sum_m w_m^{\rm s}e^{-j2\pi\ell m/M}$. The calibrated range-Doppler (RD) power
\begin{equation}
P_{p,\ell}
=\frac{|Z_{p,\ell}|^2}
{G_p\bigl[\sigma_{\rm rx}^2\sum_m|w_m^{\rm s}|^2+\sigma_{\rm rx}^2 N_{\rm bg}^{-1}|V_\ell|^2\bigr]}
\label{eq:rdnorm}
\end{equation}
has unit mean under the noise-only hypothesis for every $(p,\ell)$; the second term is concentrated in the mainlobe (bins $0,\pm1$) of the symmetric Hann window $w^{\rm s}_m=\tfrac12[1-\cos(2\pi m/(M-1))]$, small but nonzero elsewhere, the simulated target at $0.02$~m/s sits in bin $2$, and the calibration is shared by every method.

\begin{table}[t]
\caption{Simulation parameters.}
\label{tab:config}
\centering
\footnotesize
\setlength{\tabcolsep}{2.5pt}
\renewcommand{\arraystretch}{1.06}
\begin{tabular}{lcl}
\toprule
Parameter & Value & Derived quantity\\
\midrule
$f_c$ & 5.8 GHz & $\lambda_c=51.7$ mm\\
$K,\Delta f$ & $128$, 4 MHz & $B=512.0$ MHz\\
$T_u/T_{\rm cp}/T_{\rm sym}$ & $0.25/0.10/0.35~\mu$s & $\tau_{\max}=45.35$ ns\\
$T_r$, $M$ & 10 ms, 256 & $T_{\rm CPI}=MT_r=2.56$ s\\
Nominal DFT spacing & -- & $0.2928$ m / $0.01010$ m/s\\
Wall (truth) & $d_1=d_3=0.025$ m & $\varepsilon_{r,1,3}=2.5$, $\sigma=0.010$\\
 & $d_2=0.200$ m & $\varepsilon_{r,2}=5.5$, $\sigma=0.030$\\
Wall (inverse) & $d_w=0.250$ m & $\varepsilon_r=4.8$, $\sigma=0.025$\\
Wall position & $z_w=0.50$ m & 2-way loss (truth) $9.56$ dB\\
Target & $r_0=4.07$ m & $v_r=0.02$ m/s\\
Range grid & $[1.25,6.50]$ m & step $5$ mm ($1051$ cells)\\
Helmholtz grid & $h=0.1$ mm & $70\,501$ nodes\\
Ridge / reference & $10^{-2}\sigma_{\max}^2$ & $N_{\rm bg}=32$\\
\bottomrule
\end{tabular}
\end{table}

\subsection{Scalar-Delay Equivalence}

Write $g_w(f)=A(f)e^{j\varphi(f)}$ and split the unwrapped phase over the band into its least-squares linear part and a remainder,
\begin{equation}
\varphi(f)=\varphi_0-\frac{4\pi\tilde f\delta_w^{\rm grp}}{c}+\varphi_{\rm r}(\tilde f),
\label{eq:phaseexp}
\end{equation}
where $\delta_w^{\rm grp}$ is $-c/(4\pi)$ times the least-squares slope over the $K$ subcarriers, so that $\varphi_{\rm r}$ has zero mean and zero least-squares slope. The geometric length \eqref{eq:deltaw} approximates $\delta_w^{\rm grp}$ but is not equal to it: for the nominal slab the two differ by $1.6$~mm and for the truth stack by $6.8$~mm, well inside a range resolution cell.

\begin{proposition}[Scalar-delay equivalence]
\label{prop:equiv}
Suppose $g_w(f_k)=A e^{j\varphi_0}e^{-j4\pi\tilde f_k\delta_w/c}$ for constants $A>0$, $\varphi_0\in\mathbb{R}$, and $\delta_w\in\mathbb{R}$. Assume $\mathcal{R}$ and $\mathcal{R}+\delta_w$ are congruent (same spacing and cardinality); $\delta_w$ need not be an integer multiple of the grid step. Let $\mathbf{W}_{\lambda,{\rm wall}}$ be built from \eqref{eq:jwall} on $\mathcal{R}$, and let $\mathbf{W}_{\lambda,{\rm free}}$ be built with $g_w\equiv 1$ on $\mathcal{R}+\delta_w$. Then, for every $\lambda>0$, every taper $\mathbf{W}_f$, and every $\Delta\underline{\vec{d}}$,
$\mathbf{W}_{\lambda,{\rm wall}}=e^{-j\varphi_0}\mathbf{W}_{\lambda,{\rm free}}$.
Hence $|\widehat{x}^{\rm wall}(r_p)|=|\widehat{x}^{\rm free}(r_p+\delta_w)|$, the range estimates satisfy $\widehat{r}_{\rm wall}=\widehat{r}_{\rm free}-\delta_w$ identically in the data, and the detection statistics coincide.
\end{proposition}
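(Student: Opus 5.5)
The plan is to show that the two dictionaries differ only by a unimodular scalar, column by column, and then push that scalar through the ridge solution \eqref{eq:ridgesvd}. Using the reduced form of Lemma~\ref{lem:jacobian} with unit-norm columns, the wall-aware column at $r_p$ has entries
\[
J^{\rm wall}_{k,p}=\beta_p\,A e^{j\varphi_0}e^{-j4\pi\tilde f_k\delta_w/c}e^{-j4\pi\tilde f_k r_p/c}=\beta_p A e^{j\varphi_0}e^{-j4\pi\tilde f_k (r_p+\delta_w)/c}.
\]
Here $\beta_p$ is the normalization. The free-space column at the grid point $r_p+\delta_w$ is $J^{\rm free}_{k,p}=\beta'_p e^{-j4\pi\tilde f_k(r_p+\delta_w)/c}$.

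Because $|g_w|=A$ is constant, unit-norm scaling gives $\beta_p A=\beta'_p=K^{-1/2}$. Hence
\[
\mathbf J_{\rm wall}=e^{j\varphi_0}\mathbf J_{\rm free},
\]
with the $p$-th column of the free dictionary indexed by $r_p+\delta_w$. Congruence of $\mathcal R$ and $\mathcal R+\delta_w$ guarantees that this is a column-by-column bijection of equal cardinality. No integrality of $\delta_w/\text{step}$ is needed, since the free branch is simply evaluated on the translated grid. Any normalization that depends only on column norms behaves the same way.

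Next, I multiply by $\mathbf W_f$ and write $\widetilde{\mathbf J}_{\rm wall}=e^{j\varphi_0}\widetilde{\mathbf J}_{\rm free}$. Rather than tracking SVDs, which are non-unique, I will use the closed form
\[
\mathbf W_\lambda=(\widetilde{\mathbf J}^H\widetilde{\mathbf J}+\lambda\mathbf I)^{-1}\widetilde{\mathbf J}^H\mathbf W_f,
\]
which is equivalent to \eqref{eq:ridgesvd}. The Gram matrix is invariant under the scalar, because $|e^{j\varphi_0}|=1$, while $\widetilde{\mathbf J}^H$ picks up $e^{-j\varphi_0}$. This yields $\mathbf W_{\lambda,{\rm wall}}=e^{-j\varphi_0}\mathbf W_{\lambda,{\rm free}}$ for every $\lambda>0$ and every taper, with no reference to the data.

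One subtlety is the shared ridge level $\lambda=10^{-2}\sigma_{\max}^2$. The singular values of the two operators coincide, so $\sigma_{\max}$, and therefore $\lambda$, is the same for both. In any case the claim holds for every fixed $\lambda$.

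The consequences then follow. First, $\widehat{\vec x}^{\rm wall}=e^{-j\varphi_0}\widehat{\vec x}^{\rm free}$ entrywise, with entry $p$ of the free image associated with $r_p+\delta_w$; this gives the stated equality of magnitudes. Second, the row gains satisfy $G_p^{\rm wall}=G_p^{\rm free}$. The normalized statistics $|\widehat x_p|^2/G_p$ therefore agree pointwise, so the argmax indices coincide and $\widehat r_{\rm wall}=\widehat r_{\rm free}-\delta_w$ for every realization. Third, the Doppler FFT \eqref{eq:doppfft} is linear and acts per range cell, so $|Z_{p,\ell}|$ and the calibration in \eqref{eq:rdnorm} match. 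The detection statistics therefore coincide.

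The main obstacle is the normalization step. I must make sure the unit-norm convention really absorbs $A$ exactly and introduces no $p$-dependent factor; it does, because $A$ is frequency-independent. I must also confirm that the $\Delta z_p/4$ weights of \eqref{eq:jwall} are removed by that normalization. If the paper instead kept raw weights, the identity would still hold provided the free-space branch used the same weights.
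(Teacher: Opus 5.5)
Your proof is correct and follows the paper's argument essentially step for step. Unit-norm column scaling absorbs $A$, so that $\mathbf{J}_{\rm wall}=e^{j\varphi_0}\mathbf{J}'_{\rm free}$; in the closed-form ridge solution $(\widetilde{\mathbf{J}}^H\widetilde{\mathbf{J}}+\lambda\mathbf{I})^{-1}\widetilde{\mathbf{J}}^H\mathbf{W}_f$ the Gram matrix is invariant under the scalar while $\widetilde{\mathbf{J}}^H$ picks up $e^{-j\varphi_0}$; and the unit-modulus factor leaves $|\widehat{x}_p|$, $G_p$, and the per-cell Doppler FFT unchanged. Your closing aside about raw weights is wrong, however. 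Without the unit-norm normalization the columns differ by $Ae^{j\varphi_0}$, which gives $\mathbf{W}_{\lambda,{\rm wall}}=A^{-1}e^{-j\varphi_0}\mathbf{W}_{\lambda/A^2,{\rm free}}$, so the identity at a common $\lambda$ fails whenever $A\neq1$; the normalization is what cancels $A$, not merely the $\Delta z_p/4$ weights.
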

\begin{proof}
The $p$th unnormalized wall-aware column is $A e^{j\varphi_0}$ times the free-space column at range $r_p+\delta_w$. Unit-norm column scaling cancels $A$, so $\mathbf{J}_{\rm wall}=e^{j\varphi_0}\mathbf{J}_{\rm free}'$. The ridge solution of \eqref{eq:opt} is $\mathbf{W}_\lambda=(\widetilde{\mathbf{J}}^H\widetilde{\mathbf{J}}+\lambda\mathbf{I})^{-1}\widetilde{\mathbf{J}}^H\mathbf{W}_f$; the Gram matrix is invariant under the unit-modulus scalar and $\widetilde{\mathbf{J}}^H$ acquires $e^{-j\varphi_0}$, which is the stated identity. A unit-modulus scalar changes neither $|\widehat{x}_p|$ nor $G_p$, and \eqref{eq:doppfft} acts on $m$ only.
\end{proof}

\begin{remark}[Required comparison]
\label{rem:comparison}
Under Proposition~\ref{prop:equiv}, the wall-aware inverse carries no information about the wall beyond $\delta_w$. A comparison that omits a $\delta_w$-corrected branch cannot attribute the observed range-bias reduction to wave physics.
\end{remark}

Any contribution of such an operator must come from the residuals that the hypothesis excludes: the amplitude taper $A(f)/A(f_c)$ and the phase curvature $\varphi_{\rm r}$. For the nominal slab these are a $1.65$~dB ripple and an $11^\circ$ phase residual with about two Fabry--P\'erot periods across $B$; for the truth stack, whose graded index steps roughly halve each interface reflection, they are appreciably smaller.

\begin{figure}[tb]
\centering
\includegraphics[width=0.82\columnwidth]{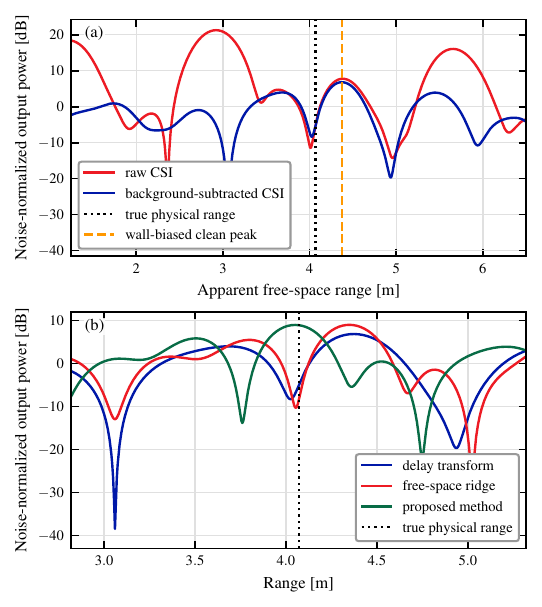}
\caption{Single-snapshot range profiles at $-10$~dB differential SNR, the ordinate being the noise-normalized output power $|Z|^2/\operatorname{var}(Z\mid\mathcal{H}_0)$, whose linear mean is $1+$ output SNR. (a)~Raw and background-subtracted CSI on the apparent axis. (b)~On the common grid, the delay transform and free-space ridge peak beyond $r_0$; the wall-aware inverse does not.}
\label{fig:profiles}
\end{figure}

\begin{figure}[tb]
\centering
\includegraphics[width=0.90\columnwidth]{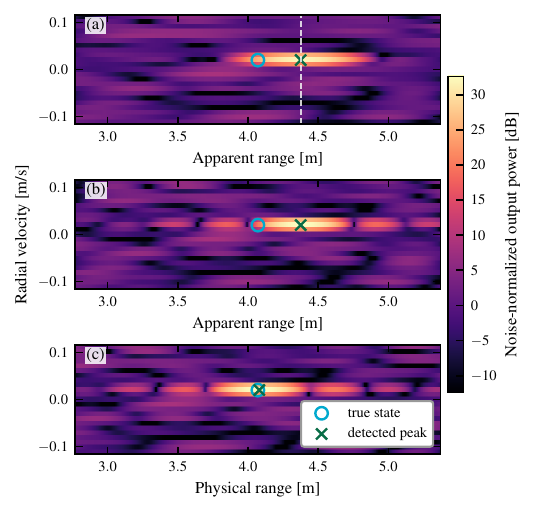}
\caption{Range-Doppler maps of the same CPI on a common decibel scale with \eqref{eq:rdnorm}: (a)~delay transform, (b)~free-space ridge, (c)~wall-aware inverse. All recover the same velocity; the wall-aware map is indexed by physical range, while (a) and (b) are uncorrected and indexed by apparent range.}
\label{fig:rdmaps}
\end{figure}

\begin{figure}[tb]
\centering
\includegraphics[width=0.80\columnwidth]{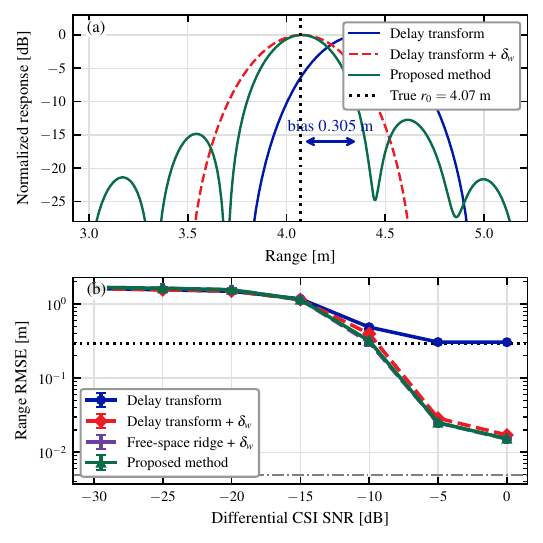}
\caption{Proposed inverse versus delay processing with and without $\delta_w$. (a)~Noiseless range responses on the common grid. (b)~Monte Carlo range RMSE with 95\% bootstrap intervals: uncorrected delay saturates at the wall range bias (dotted: $\Delta R$; dash-dotted: grid step).}
\label{fig:impact}
\end{figure}

\begin{table*}[t]
\caption{Paired Monte Carlo range RMSE (m) and detection probability on the common 5~mm grid. $4000$ trials per SNR through steps (1)--(4), nominal $P_{\rm fa}=0.01$, association gate $0.40$ m. Branches ``$+\,\delta_w$'' are built on $\mathcal{R}+\delta_w$ and labelled by $\mathcal{R}$.}
\label{tab:main}
\centering
\footnotesize
\setlength{\tabcolsep}{5pt}
\renewcommand{\arraystretch}{1.05}
\begin{tabular}{lcccccccccccccc}
\toprule
& \multicolumn{7}{c}{Range RMSE (m)} & \multicolumn{7}{c}{$P_{\rm d}$}\\
\cmidrule(lr){2-8}\cmidrule(lr){9-15}
Branch & $-30$ & $-25$ & $-20$ & $-15$ & $-10$ & $-5$ & $0$ dB
       & $-30$ & $-25$ & $-20$ & $-15$ & $-10$ & $-5$ & $0$ dB\\
\midrule
Delay transform                    & 1.611 & 1.556 & 1.486 & 1.173 & 0.487 & 0.306 & 0.305 & 0.002 & 0.003 & 0.010 & 0.082 & 0.665 & 0.999 & 1.000\\
Delay transform $+\ \delta_w$      & 1.610 & 1.550 & 1.495 & 1.145 & 0.397 & 0.029 & 0.017 & 0.001 & 0.003 & 0.011 & 0.092 & 0.678 & 1.000 & 1.000\\
Free-space ridge                   & 1.658 & 1.630 & 1.560 & 1.174 & 0.434 & 0.306 & 0.305 & 0.002 & 0.004 & 0.011 & 0.116 & 0.756 & 1.000 & 1.000\\
Free-space ridge $+\ \delta_w$     & 1.679 & 1.634 & 1.559 & 1.142 & 0.304 & 0.025 & 0.015 & 0.002 & 0.003 & 0.012 & 0.124 & 0.780 & 1.000 & 1.000\\
Proposed method                    & 1.680 & 1.628 & 1.559 & 1.146 & 0.319 & 0.025 & 0.015 & 0.002 & 0.003 & 0.012 & 0.118 & 0.773 & 1.000 & 1.000\\
\midrule
\multicolumn{15}{l}{\emph{Synthetic $g_w$ in the ridge inverse (residuals of Proposition~\ref{prop:equiv}):}}\\
\ \ linear phase, flat $|g_w|$     & 1.679 & 1.633 & 1.559 & 1.140 & 0.304 & 0.026 & 0.016 & 0.002 & 0.003 & 0.012 & 0.124 & 0.780 & 1.000 & 1.000\\
\ \ exact phase, flat $|g_w|$      & 1.680 & 1.633 & 1.564 & 1.151 & 0.314 & 0.025 & 0.015 & 0.002 & 0.003 & 0.011 & 0.115 & 0.788 & 1.000 & 1.000\\
\ \ exact $|g_w|$, linear phase    & 1.682 & 1.623 & 1.560 & 1.138 & 0.310 & 0.026 & 0.016 & 0.002 & 0.003 & 0.012 & 0.119 & 0.777 & 1.000 & 1.000\\
\bottomrule
\end{tabular}
\end{table*}

\section{Results and Discussions}
\label{sec:results}

We present representative results for the model of Section~II under the configuration of Table~\ref{tab:config}. All branches share the range grid, CSI realization, tapers, gate, and noise model, and every trial realizes steps (1)--(4) of Fig.~\ref{fig:pipeline} in the post-FFT CSI domain. Thresholds are calibrated per processor and per signal-to-noise ratio (SNR) from noise-only trials, and detection requires threshold crossing with correct range association. The differential-CSI SNR is defined on the unit-normalized UE column,
\begin{equation}
\mathrm{SNR}=10\log_{10}\frac{K^{-1}\|\vec{d}_{\rm target}\|_2^2}{\sigma_{\rm rx}^2(1+N_{\rm bg}^{-1})}.
\end{equation}
The transfer matrix, the sampled CP-OFDM identity \eqref{eq:Ykm}, and Proposition~\ref{prop:equiv} were each verified numerically.

Fig.~\ref{fig:rdmaps} shows the RD maps of one CPI at $-10$~dB on the shared decibel scale of \eqref{eq:rdnorm}. Background subtraction removes the wall return and the fixed reflectors, after which the three processors differ mainly in range label, Fig.~\ref{fig:profiles}.

As observed in Fig.~\ref{fig:impact}(a), the noiseless delay transform peaks at $4.375$~m, a range bias of $0.305$~m $\approx\delta_w$, whereas the wall-aware inverse peaks at $4.075$~m, as do both branches rebuilt on $\mathcal{R}+\delta_w$.

Table~\ref{tab:main} and Fig.~\ref{fig:impact}(b) report one paired single-snapshot campaign. The uncorrected branches saturate at their deterministic range bias ($0.305$~m) once the SNR suffices to find the peak, while the corrected branches keep improving. The wall-aware ridge and the free-space ridge on $\mathcal{R}+\delta_w$ agree: their paired RMSE difference is $-1.6$~cm at $-10$~dB and negligible above it, and their $P_{\rm d}$ difference stays within one percentage point, so no detection gain is attributable to the wall model. Both ridge branches outperform the delay transform in the threshold region.

Replacing $g_w$ by synthetic factors that retain only part of it changes the paired differences by at most $1.6$~cm in RMSE and $1.5$ percentage points in $P_{\rm d}$, with varying sign. Perturbing the assumed permittivity or thickness by $\pm10\%$ and $\pm20\%$ shifts the scalar-corrected branches by exactly the change in $\delta_w$ of \eqref{eq:deltaw}, which the wall-aware branch tracks only approximately: it is no more robust than the scalar it implements.

The study is conditional on its idealizations: a single point target treated to first order about a fixed wall background, a receiver that knows the transmitted symbols and is ideally synchronized and phase coherent over the CPI, and unit-column normalization at equal received differential SNR. Finally, $\delta_w(\theta)$ at oblique incidence grows with angle, so a single translation would not serve a distributed aperture.

\section{Conclusions}

We proposed a wall-aware inverse Maxwell range operator for through-wall OFDM sensing, the distorted-Born Jacobian of the discretized Helmholtz equation of the calibrated wall, assembled offline and applied online per snapshot. In one dimension it reduces to the squared wall transmission times a free-space steering kernel, and for constant magnitude and linear phase it is a free-space inverse translated by $\delta_w$. On a simulated $L$-layer wall, with all branches on one grid and a common post-FFT CSI model, the proposed inverse removes the range bias at which wall-unaware processing saturates, and a scalar-corrected free-space inverse tracks it closely, the two differing only marginally in empirical range RMSE and detection probability over the tested sweep; the residual content of the calibrated wall factor changes little even in the threshold region.

\balance
\bibliographystyle{IEEEtran}
\bibliography{references}

@ARTICLE{Nguyen_TGCN_2026,
  author={Nguyen, Nghia Thinh and Do, Tri Nhu},
  journal={IEEE Trans. Green Commun. Netw.}, 
  title={{Generative and Explainable AI for High-Dimensional MIMO-OFDM Channel Estimation in Time--Frequency--Space Domain}}, 
  year={2026},
  volume={10},
  number={},
  pages={3060-3075},
  doi={10.1109/TGCN.2026.3693617},
  ISSN={2473-2400},
  month={},}

@misc{MathWorksWLANSensingDL,
  author       = {{The MathWorks, Inc.}},
  title        = {{Detect Human Presence Using Wireless Sensing with Deep Learning}},
  howpublished = {WLAN Toolbox documentation},
  year         = {2024},
  url          = {https://www.mathworks.com/help/wlan/ug/detect-human-presence-using-wireless-sensing-with-deep-learning.html}
}

@article{LuoTWC2024,
  author={Hongliang Luo and others},
  journal={IEEE Trans. Wireless Commun.},
  title={{Integrated Sensing and Communications in Clutter Environment}},
  year={2024},
  volume={23},
  number={9},
  pages={10941--10956},
  doi={10.1109/TWC.2024.3377184}
}

@article{AhmadTAES2005,
  author  = {Fauzia Ahmad and Moeness G. Amin and Saleem A. Kassam},
  journal = {IEEE Trans. Aerosp. Electron. Syst.},
  title   = {{Synthetic Aperture Beamformer for Imaging Through a Dielectric Wall}},
  year    = {2005},
  volume  = {41},
  number  = {1},
  pages   = {271--283},
  doi     = {10.1109/TAES.2005.1413768}
}

@article{DehmollaianTGRS2008,
  author  = {Mojtaba Dehmollaian and Kamal Sarabandi},
  journal = {IEEE Trans. Geosci. Remote Sens.},
  title   = {{Refocusing Through Building Walls Using Synthetic Aperture Radar}},
  year    = {2008},
  volume  = {46},
  number  = {6},
  pages   = {1589--1599},
  doi     = {10.1109/TGRS.2008.916212}
}

@BOOK{Chew1995Waves,
  author    = {Weng Cho Chew},
  title     = {Waves and Fields in Inhomogeneous Media},
  publisher = {IEEE Press},
  address   = {Piscataway, NJ, USA},
  year      = {1995}
}

@article{ChewTMI1990,
  author  = {Weng Cho Chew and Y. M. Wang},
  journal = {IEEE Trans. Med. Imag.},
  title   = {{Reconstruction of Two-Dimensional Permittivity Distribution Using the Distorted {Born} Iterative Method}},
  year    = {1990},
  volume  = {9},
  number  = {2},
  pages   = {218--225},
  doi     = {10.1109/42.56334}
}

@article{LawJCP2026,
  author  = {Yann-Meing Law},
  journal = {J. Comput. Phys.},
  title   = {{The High-Order Hermite Discrete Correction Function Method for Surface-Driven Electromagnetic Problems}},
  year    = {2026},
  volume  = {563},
  pages   = {115058},
  doi     = {10.1016/j.jcp.2026.115058}
}

@article{DaiCOMST2026,
  author  = {Qianglong Dai and others},
  title   = {A Tutorial on {MIMO-OFDM} {ISAC}: From Far-Field to Near-Field},
  journal = {IEEE Commun. Surveys Tuts.},
  volume  = {28},
  pages   = {4319--4358},
  year    = {2026},
  doi     = {10.1109/COMST.2025.3650568}
}

@article{LiaoTAES2025,
  author  = {Jiancheng Liao and Xiaolu Zeng and Xiaopeng Yang and Zixiang Yin and Zihan Chen},
  title   = {{MIMO} Through-the-Wall Radar 3-{D} Imaging Using Propagation Compensation and Coherence Factor},
  journal = {IEEE Trans. Aerosp. Electron. Syst.},
  volume  = {61},
  number  = {6},
  pages   = {17213--17226},
  year    = {2025},
  doi     = {10.1109/TAES.2025.3602758}
}

@article{YaoTIM2025,
  author  = {Yu Yao and others},
  title   = {Target Localization and Wall Parameters Estimation via Distributed Through-Wall Imaging Radar},
  journal = {IEEE Trans. Instrum. Meas.},
  volume  = {74},
  pages   = {1--11},
  year    = {2025},
  doi     = {10.1109/TIM.2024.3502875}
}

\end{document}